\documentclass[11pt, reqno]{amsart}

\usepackage{multicol} 

\usepackage{amsmath,amsthm,amssymb,mathrsfs,stmaryrd,color}

\usepackage[top=1in, bottom=1in, left=0.5in, right=0.5in]{geometry}
\usepackage{float}
\usepackage{graphicx}

\usepackage[utf8]{inputenc}
\usepackage[T1]{fontenc}

\usepackage{enumitem}
\setlist[enumerate]{itemsep=2pt,parsep=2pt,before={\parskip=2pt}}

\usepackage[colorlinks=true,hyperindex, linkcolor=red!60, pagebackref=false, citecolor=cyan, pdfpagelabels]{hyperref}
\usepackage[capitalize]{cleveref}

\usepackage{url}
\usepackage{breakurl}

\usepackage{algorithm,algorithmicx}
\usepackage{algpseudocode}

\newcommand*\Let[2]{\State #1 $\gets$ #2}
\algnewcommand\algorithmicinput{\textbf{Input:}}
\algnewcommand\algorithmicinputphantom{\phantom{\textbf{Input:}}}
\algnewcommand\Input{\item[\algorithmicinput]}
\algnewcommand\PhantomInput{\item[\algorithmicinputphantom]}

\usepackage{tikz}
\usetikzlibrary{calc}
\usetikzlibrary{math}
\usetikzlibrary{snakes}

\usepackage{thmtools} 
\usepackage{thm-restate}

\newtheorem{theorem}{Theorem}[section]
\newtheorem*{theorem*}{Theorem}
\newtheorem*{definition*}{Definition}
\newtheorem{proposition}[theorem]{Proposition}
\newtheorem{lemma}[theorem]{Lemma}

\theoremstyle{definition}
\newtheorem{definition}[theorem]{Definition}

\newtheorem{remark}[theorem]{Remark}

\newtheorem{claim}[theorem]{Claim}


\newcommand{\Z}{\ensuremath{\mathbb{Z}}}
\newcommand{\N}{\ensuremath{\mathbb{N}}}

\newcommand{\dil}{\mathrm{dil}}
\newcommand{\dilt}{\mathrm{dil}_3}

\newcommand{\opti}{\mathcal{M}}
\newcommand{\optiloc}{\mathcal{M}_{\mathrm{loc}}}

\begin{document}

\title{A note on optimal degree-three spanners of the square lattice}

\author{Damien Galant}
\address{Department of Mathematics, University of Mons (UMONS), Place du Parc 20, 7000 Mons, Belgium.}
\email{damien.galant@student.umons.ac.be}

\author{Cédric Pilatte}
\address{Department of Mathematics and their Applications, \'Ecole Normale Supérieure (ENS), Rue d'Ulm, 45, 75005 Paris, France --- Department of Mathematics, University of Mons (UMONS), Place du Parc 20, 7000 Mons, Belgium.}
\email{cedric.pilatte@ens.fr}

\begin{abstract}
In this short note, we prove that the degree-three dilation of the square lattice $\mathbb{Z}^2$ is $1+\sqrt{2}$. 
This disproves a conjecture of Dumitrescu and Ghosh. We give a computer-assisted proof of a local-global 
property for the uncountable set of geometric graphs achieving the optimal dilation.
\end{abstract}

\vspace*{-0.35cm} 

\maketitle

\begin{multicols}{2}

\section{Introduction}

Let $P$ be a set of points in the Euclidean plane. A \emph{geometric graph} $G$ on $P$ is an undirected graph drawn in the plane whose vertices are the points of $P$ and whose edges are straight line segments between the corresponding points. We write $d_G(p, q)$ for the length of the shortest path between $p$ and $q$ that uses only edges of $G$ ($+\infty$ if there is no such path). A geometric graph is \emph{plane} if no two edges intersect (except possibly at a common vertex). 

We measure the efficiency of a geometric graph $G$ with a real number, called the \emph{dilation} (or \emph{stretch factor}, or \emph{spanning ratio}) of $G$. The \emph{dilation of a pair} $(p, q)$ of distinct vertices of $G$ is defined as 
$$\dil_G(p, q) := \frac{d_G(p, q)}{|pq|},$$
where $|pq|$ is the Euclidean distance between $p$ and $q$. The \emph{dilation} of $G$ is the largest dilation between two vertices of $G$,
$$\dil(G) := \sup_{p\neq q} \dil_G(p, q).$$
In other words, the dilation of $G$ is the least $t\geq 1$ such that, for any $p$ and $q$ in $P$, the graph distance $d_G(p, q)$ is at most $t$ times the Euclidean distance $|pq|$.

There has been extensive research on geometric graphs with low dilation which also satisfy other sparseness properties. We recall some of these results here, focussing on the following sparseness properties: being plane and having small maximum degree. We refer the reader to the survey by Bose and Smid~\cite{survey} for more details and related problems.

For some constant $C>0$, the following holds. For any finite point set $P$, there is a \emph{plane} geometric graph on $P$ with dilation at most $C$. The current best known constant $C$ is due to Xia~\cite{xia}, who gave a rather elaborate proof that $C = 1.998$ works. The previous record was $C=2$, a very elegant result of Chew~\cite{chew}. On the other hand, it is known that $C$ must be at least 1.4336~\cite{nous}. The best possible constant $C$ is conjectured to be very close to this lower bound.
	
The previous result still holds (for a different value of $C$) if we replace the condition of being plane by that of having \emph{maximum degree 3}~\cite{das}. By considering points arranged in a grid, it is readily seen that $3$ is the lowest possible maximum degree for which the result holds.

It is natural to ask whether we can simultaneously require planarity \emph{and} small maximum degree. Define the \emph{degree-$k$ dilation of $P$} by 
$$\dil_k(P) := \inf_{\substack{\Delta(G) \leq k\\G\text{ plane}}} \dil(G),$$
where the infimum is taken over all plane geometric graphs on $P$ of maximum degree~$k$. Bose \textit{et al.}~\cite{bose} were the first to show the existence of a $k$ (namely $k=23$) such that the degree-$k$ dilation of every finite point set $P$ is bounded by an absolute constant. A lot of research has been done to determine the best possible value of $k$. Bonichon \textit{et al.}~\cite{bonichon} (and later Kanj \textit{et al.}~\cite{kanj}) proved that it is possible to take $k = 4$. It is a major open problem to reduce the maximum degree down to $3$.

Computing the exact value of $\dil_k(P)$ for a concrete point set $P$ is not easy in general, because the set of geometric graphs to consider is often very large. Upper bounds for the degree-$3$ dilation of special classes of point sets have been obtained by Biniaz \textit{et al.}~\cite{biniaz}. Let us mention their upper bound of $3\sqrt{2}$ for the degree-$3$ dilation of non-uniform rectangular grids. 

The square lattice $\Z^2$ and the hexagonal lattice $\Lambda_{\Delta} = \Z \oplus e^{i\pi/3}\Z$ are among the few nontrivial examples of point sets for which the values $\dil_k(\cdot)$ are known. The following values were obtained by Dumitrescu and Ghosh~\cite{dumitrescu}. 

\begin{equation*}
	\begin{array}{c||c|c|c|c}
		\dil_k(\Lambda ) & k = 3 & k = 4 & k=5 & k\geq 6\\
		\hline 
	 	\hline
	 	\Lambda = \Z^2 & * & \sqrt{2} & \sqrt{2} & \sqrt{2} \\
	 	\hline 
	 	\Lambda = \Lambda_{\Delta}& 1+\sqrt{3}& 2 & 2 & \frac{2}{\sqrt{3}}
	\end{array}
\end{equation*}

For $\dilt(\Z^2)$, Dumitrescu and Ghosh showed that ${1+\sqrt{2} \leq \dilt(\Z^2) \leq (7+5\sqrt{2})/\sqrt{29}}$. They later gave the improved upper bound ${\dilt(\Z^2) \leq (3+2\sqrt{2})/\sqrt{5}}$, which they conjectured to be tight~\cite{dumitrescu}. We disprove this conjecture by giving examples of degree-$3$ plane geometric graphs of $\Z^2$ with dilation $1+\sqrt{2}$. 

The lower bound ${\dilt(\Z^2) \geq 1+\sqrt{2}}$ is trivial. Indeed, let $G$ be a geometric graph on $\Z^2$ of maximum degree~$3$. Let $p$ be an arbitrary point in $\Z^2$ and let $q_1, \ldots, q_4$ be the points in $\Z^2$ with $|pq_i|=1$ (see \cref{fig:lowerBound}). Since $p$ has degree a most $3$, there is some $1\leq i\leq 4$ with $d_G(p, q_i) > 1$, and thus $\dil(G) \geq \dil_G(p, q_i) = d_G(p, q_i) \geq 1+\sqrt{2}$. We will see below that there do exist graphs $G$ which match this lower bound. 

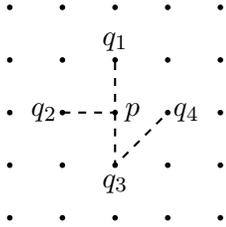
\begin{figure}[H]
\centering
\begin{tikzpicture}[scale = 0.7]
	\draw[dashed, thick] (0, 0) -- (0, -1) -- (1, 0); 
	\draw[dashed, thick] (-1, 0) -- (0, 0) -- (0, 1); 
	\foreach \x in {-2, -1, ..., 2} {
		\foreach \y in {-2, -1, ..., 2} {
			\fill (\x, \y) circle (1.5pt);
		}
	}
	\node[right] (0) at (0, 0) {$p$};
	\foreach \i in {1, 2, ..., 4} {
		\node[xshift = 7*cos(180*\i/2), yshift = 7*sin(180*\i/2)] 
			 (\i) at ({cos(180*\i/2)}, {sin(180*\i/2)}) {$q_{\i}$};
	}
\end{tikzpicture}
\caption{Lower bound on the degree-$3$ dilation of $\Z^2$.}
\label{fig:lowerBound}
\end{figure}

\begin{definition}
Let $\opti$ be the set of \emph{optimal} graphs, i.e.~the geometric graphs on $\Z^2$ of maximum degree $3$ which have dilation $1+\sqrt{2}$.
\end{definition}

\begin{definition}
We also define the set $\optiloc$ of \emph{locally optimal} graphs: the geometric graphs $G$ on $\Z^2$ of maximum degree $3$ which satisfy the dilation constraint $\dil_G(p, q)\leq 1+\sqrt{2}$ for every pair of vertices $(p, q)$ with $|pq| \leq \sqrt{5}$.
\end{definition}

We claim that the geometric graphs represented in \cref{fig:periodic} are optimal. Since they are periodic, it is easy to check that they are \emph{locally} optimal. That they indeed have dilation $1+\sqrt{2}$ directly follows from the following result.

\begin{restatable}[``Local-global principle'']{theorem}{main}\label{thm:main}
$\optiloc = \opti$.
\end{restatable}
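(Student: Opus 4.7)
The inclusion $\opti \subseteq \optiloc$ is immediate from the definitions: the dilation bound for all pairs specializes in particular to pairs with $|pq|\leq\sqrt{5}$. All the content lies in the converse, which I would approach via a combination of structural rigidity and computer-assisted enumeration.

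Fix $G\in\optiloc$ and $p,q\in\Z^2$; the aim is to prove $d_G(p,q)\leq(1+\sqrt{2})|pq|$. The local hypothesis is highly restrictive: at each vertex $x$, the degree-$3$ constraint forces at least one missing unit neighbor $y$, and the dilation bound at distance $1$ then forces the detour from $x$ to $y$ to pass through one of the two corner vertices forming a Pythagorean right triangle with $x$ and $y$. Iterating this rigidity at distances $\sqrt{2}$, $2$, and $\sqrt{5}$ should constrain the edge set of $G$ inside any bounded window to lie in a finite list of admissible local patches. The plan is then: (i) enumerate these patches by computer inside a window of some sufficiently large radius $R$; (ii) verify algorithmically that within every admissible patch, each lattice point $y$ of the patch is connected to the center by a path of length at most $(1+\sqrt{2})|y|$; and (iii) use these patches as building blocks to construct, for arbitrary $(p,q)$ with $|pq|>R$, a short path in $G$ by strong induction on $|pq|$, exhibiting an intermediate vertex $r$ close to segment $[p,q]$ to which the patch around $p$ certifies a controlled-length path, then recursing on $(r,q)$.

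The main obstacle is that the induction does not close naively: if $r$ fails to lie exactly on segment $[p,q]$ (which, for most target directions, is impossible in $\Z^2$), then $|pr|+|rq|>|pq|$, and concatenating two paths of dilation $1+\sqrt{2}$ overshoots the target. The remedy I foresee is to strengthen the inductive invariant, for instance by tracking a signed slack that accumulates savings from path segments aligned with $[p,q]$ to pay for lateral deviations, and to verify via the computer enumeration that this richer invariant propagates across every admissible patch transition. Setting up the invariant so that the resulting finite case analysis actually closes the induction---and in particular accommodates the uncountable family of optimal graphs, which precludes simply reducing to the periodic examples of \cref{fig:periodic}---is where the substantive work, and the need for computer assistance, lies.
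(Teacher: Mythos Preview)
Your diagnosis of the obstacle is exactly right: the triangle inequality loses whenever the intermediate vertex $r$ is off the segment $[p,q]$, so concatenating two paths of dilation $1+\sqrt{2}$ overshoots. But your proposed remedy---enumerating all admissible local patches and carrying a bespoke ``signed slack'' invariant through the induction---is both vaguer and heavier than what the paper actually does.

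The paper's resolution is to isolate the slack generation into a single uniform statement, the \emph{dilation boost} (\cref{lem:knight}): for every $G\in\optiloc$ and every pair $(p,q)$ with $|pq|=\sqrt{5}$, one has $d_G(p,q)\le 3+\sqrt{2}$, which is strictly below the a~priori bound $(1+\sqrt{2})\sqrt{5}\approx 5.40$. This is the only computer-assisted step, and it is proved by a backtracking search that rules out configurations rather than by positively enumerating all local patches. Once the boost is in hand, the global argument never looks at $G$ again: one defines a fixed auxiliary weighted graph $H$ on $\Z^2$ in which unit and diagonal steps cost $(1+\sqrt{2})$ times their length but knight moves cost only $3+\sqrt{2}$, observes that $d_G\le d_H$ for every $G\in\optiloc$, and then proves $d_H(p,q)\le(1+\sqrt{2})|pq|$ by a clean induction stepping from $(x,y)$ to $(x-2,y-1)$. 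The single inequality
\[
(1+\sqrt{2})\sqrt{(x-2)^2+(y-1)^2} + (3+\sqrt{2}) \le (1+\sqrt{2})\sqrt{x^2+y^2}
\]
in the octant $5\le y\le x$ closes the induction; the fixed gain $(1+\sqrt{2})\sqrt{5}-(3+\sqrt{2})$ absorbs the curvature loss, which is precisely the ``slack'' you were looking for, but obtained once and for all rather than tracked dynamically.

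So the conceptual gap in your plan is not that it is wrong, but that it misses the decoupling: you do not need to know which of the uncountably many local pictures you are in, only that every one of them satisfies the same improved bound at knight-move distance. That turns your patch-by-patch induction on $G$ into a uniform induction on a single explicit graph $H$, and reduces the computer's job from ``list everything that can happen'' to ``show that one specific thing cannot''.
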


The goal of this paper is to study the class of optimal graphs. Our main result is \cref{thm:main}, which characterizes the optimal graphs in terms of their structure in every ball of radius $\sqrt{5}$. It will be proved in \cref{sec:proofmain}, assuming a key lemma, \cref{lem:knight}. We will give a computer-assisted proof of \cref{lem:knight} in \cref{sec:CAP}.

\section{Degree-3 dilation of the square lattice}
\label{sec:proofmain}

We start this section by showing that the set of (locally) optimal graphs is very large. This seems to indicate that it might be difficult to give a fully explicit description of the set of optimal graphs, and thus makes \cref{thm:main} more interesting.

\begin{proposition}\label{prop:uncountable}
There are uncountably many locally optimal geometric graphs.
\end{proposition}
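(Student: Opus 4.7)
The plan is to construct an injection from the uncountable set $\{0,1\}^{\Z}$ into $\optiloc$. The guiding principle is that local optimality is genuinely local: it constrains only pairs of vertices at Euclidean distance at most $\sqrt{5}$, so edge modifications made in two sufficiently far-apart regions cannot interact. This makes the problem ripe for an independent-choices construction.

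Starting from one of the periodic locally optimal graphs in \cref{fig:periodic}, call it $G_0$, I would search for a finite modification of $G_0$ inside a bounded region $M \subset \Z^2$, which agrees with $G_0$ on the complement of $M$ and produces another locally optimal graph $G_0'$. Once such a modification is at hand, consider its translates $M + v_n$ with $v_n = (0, Ln)$ for $L$ sufficiently large (larger than the diameter of $M$ plus $2(1+\sqrt{2})\sqrt{5}$), so that the $(1+\sqrt{2})\sqrt{5}$-neighborhoods of the regions $M + v_n$ are pairwise disjoint.

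For each $\epsilon = (\epsilon_n)_{n\in\Z} \in \{0,1\}^{\Z}$, define $G_\epsilon$ by applying the modification in $M + v_n$ precisely when $\epsilon_n = 1$. Local optimality of $G_\epsilon$ follows from the separation of modifications: for any pair $(p, q)$ with $|pq| \le \sqrt{5}$, the $(1+\sqrt{2})\sqrt{5}$-neighborhood of $\{p,q\}$ meets at most one of the modified regions, so checking the constraint $\dil_{G_\epsilon}(p,q) \leq 1+\sqrt{2}$ reduces to a check inside $G_0$ or inside $G_0'$, both of which are locally optimal. Distinct sequences $\epsilon$ yield distinct graphs, since $\epsilon_n$ can be recovered from $G_\epsilon$ by inspecting the edges in a neighborhood of $v_n$. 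This gives the required injection $\{0,1\}^{\Z} \hookrightarrow \optiloc$, and the proposition follows.

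The main obstacle is producing a concrete local modification. This amounts to exhibiting a ``defect'' that can be inserted into a periodic optimal graph without violating the degree-3 bound or the local dilation constraint, which is a finite case analysis given the periodicity of the underlying graph. One natural candidate would be to look within \cref{fig:periodic} for two locally optimal graphs that coincide outside some bounded region; another would be to slightly perturb a single periodic graph by swapping a few edges in a way that preserves degree 3 and is compatible with $\dil_G(p, q) \leq 1+\sqrt{2}$ for $|pq| \leq \sqrt{5}$.
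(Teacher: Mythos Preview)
Your approach is exactly the paper's: exhibit a periodic locally optimal template together with a local two-way switch, and then make independent choices in infinitely many disjoint copies of the switch. The paper carries this out concretely with the graph in \cref{fig:uncountable}: each dashed circle is a $2\times 2$ cell where one may place either two horizontal or two vertical unit segments, and the verification that every resulting graph is locally optimal is delegated to a short computer check (the file \texttt{proposition\_2\_1.py}), exploiting the fact that only finitely many local neighbourhoods occur.

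Two small points you glossed over. First, your translates $v_n=(0,Ln)$ must lie in the period lattice of $G_0$, or else the ``same'' modification does not transplant; this is easily arranged by taking $L$ a suitable multiple. Second, and more substantively, you correctly identify the crux --- producing an explicit compatible local switch --- but leave it undone; the paper's contribution here is precisely to exhibit one (and verify it), so your proposal is a correct skeleton rather than a complete proof.
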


\begin{proof}[Proof]
For every countable sequence $(b_i)_{i\in \N}$ of zeroes and ones, we construct a geometric graph on $\Z^2$ as follows. Consider the periodic geometric graph in solid lines shown in \cref{fig:uncountable}, and let $(C_i)_{i\in \N}$ be an enumeration of the dashed circles. In the $i$-th circle, we add two vertical segments if $b_i = 1$ and two horizontal ones if $b_i = 0$. 

We obtain $2^{\aleph_0}$ degree-$3$ geometric graphs in this way. Verifying that these graphs are locally optimal is a finite check by ``almost periodicity''. This verification is performed by the python file \texttt{proposition\_2\_1.py}.
\end{proof}

\end{multicols}

\begin{figure}[ht]
\centering
\begin{tikzpicture}[scale = 0.5]
\tikzmath {\mywidth = 12; \myheight = 12;}
\begin{scope} 
	\clip (0, 0) rectangle (\mywidth, \myheight);
	\foreach \i in {-4, ..., 4} {
		\foreach \j in {-4, ..., 4} {
			\begin{scope}[shift = {(\i*4+\j*2, \j*(-3))}]
				\draw[thin] (0, 1) -- (1, 0) -- (4, 0) -- (4, 1) -- (5, 1) -- (3, 3);
				\draw[thin] (2, 0) -- (2, 1) -- (1, 2) -- (1, 1);
				\draw[thin] (1, 2) -- (2, 2) -- (1, 3);
				\draw[thin] (2, 1) -- (3, 1) -- (4, 0);
				\draw[thin] (3, 1) -- (3, 2);
				\draw[thin] (2, 2) -- (4, 2);
			\end{scope}
		}
	}
	\foreach \x in {-1, ..., 20} {
		\foreach \y in {-1, ..., 20} {
			\fill (\x, \y) circle (3pt);
		}
	}
	\begin{scope}[xshift = 5cm, yshift = 6cm]
		\draw[very thick, dotted, ->] (0, 0) -- (2, -3);
		\draw[very thick, dotted, ->] (0, 0) -- (2, 3);
	\end{scope}
\end{scope}
\begin{scope}[xshift = 13cm] 
	\clip (0, 0) rectangle (\mywidth, \myheight);
	\foreach \i in {-4, ..., 4} {
		\foreach \j in {-4, ..., 4} {
			\begin{scope}[shift = {(\i*3+\j*4, \i*2+\j*(-4))}]
				\draw[thin] (0, 0) -- (1, 0) -- (1, -1) -- (2, -1) -- (2, -2) -- (3, -2) -- (3, -3) -- (4, -3);
				\draw[thin] (4, -3) -- (4, -4) -- (5, -3) -- (6, -2) -- (7, -1);
				\draw[thin] (1, -1) -- (2, 0) -- (2, 1) -- (1, 1);
				\draw[thin] (3, -3) -- (4, -2) -- (4, -1) -- (4, 0) -- (3, -1) -- (2, -2);
				\draw[thin] (6, -2) -- (5, -2) -- (5, -1) -- (6, 0);
				\draw[thin] (4, -2) -- (5, -2);
				\draw[thin] (4, -1) -- (5, -1);
				\draw[thin] (4, 0) -- (5, 1);
				\draw[thin] (2, 0) -- (3, 0);
				\draw[thin] (2, 1) -- (3, 1);
				\draw[thin] (3, -1) -- (3, 0) -- (3, 1) -- (4, 2);
			\end{scope}
		}
	}
	\foreach \x in {-1, ..., 20} {
		\foreach \y in {-1, ..., 20} {
			\fill (\x, \y) circle (3pt);
		}
	}
	\begin{scope}[xshift = 4cm, yshift = 6cm]
		\draw[very thick, dotted, ->] (0, 0) -- (4, -4);
		\draw[very thick, dotted, ->] (0, 0) -- (3, 2);
	\end{scope}
\end{scope}
\begin{scope}[xshift = 26cm] 
	\clip (0, 0) rectangle (\mywidth, \myheight);
	\foreach \i in {-4, ..., 4} {
		\foreach \j in {-4, ..., 4} {
			\begin{scope}[shift = {(\i*2+\j*4, \i*3+\j*(-3))}]
				\draw[thin] (0, 0) -- (3, 3) -- (3, 2) -- (4,2) -- (4, 1) -- (5, 1) -- (5, 0) -- (6, 1) -- (6, 0) -- (7, 0);
				\draw[thin] (2, 2) -- (2, 0) -- (3, 0) -- (3, -1) -- (4, 0) -- (4, -1) -- (5, -1) -- (5, -2);
				\draw[thin] (4, -1) -- (4, -2);
				\draw[thin] (2, -2) -- (3, -1);
				\draw[thin] (1, -1) -- (2, 0);
				\draw[thin] (5, -1) -- (5, 0);
				\draw[thin] (3, 1) -- (4, 2);
				\draw[thin] (4, 0) -- (5, 1);
				\draw[thin] (2, 1) -- (3, 1) -- (3, 0);
			\end{scope}
		}
	}
	\foreach \x in {-1, ..., 20} {
		\foreach \y in {-1, ..., 20} {
			\fill (\x, \y) circle (3pt);
		}
	}
	\begin{scope}[xshift = 4cm, yshift = 6cm]
		\draw[very thick, dotted, ->] (0, 0) -- (2, 3);
		\draw[very thick, dotted, ->] (0, 0) -- (4, -3);
	\end{scope}
\end{scope}
\draw[very thick] (0, 0) rectangle (\mywidth, \myheight);
\draw[very thick] (13, 0) rectangle (\mywidth+13, \myheight);
\draw[very thick] (26, 0) rectangle (\mywidth+26, \myheight);
\end{tikzpicture}
\caption{Examples of periodic degree-$3$ spanners of $\Z^2$ with dilation $1+\sqrt{2}$.}
\label{fig:periodic}
\end{figure}
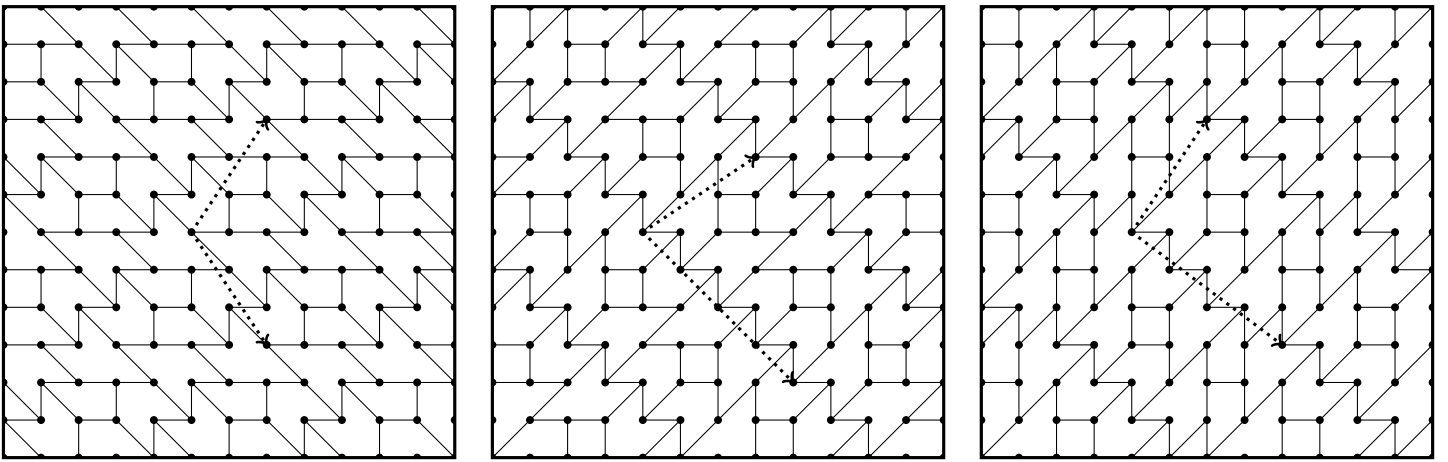

\begin{multicols}{2}

\begin{figure}[H]
\vspace*{0.1cm}
\begin{tikzpicture}[scale = 0.55]
\begin{scope}
	\clip (0, 1) rectangle (15, 16);
	\foreach \i/\j in {-12/0, -12/1, -12/6, -12/9, -12/10, -11/-1, -11/0, -11/2, -11/4, -11/5, -11/6, -11/7, -11/8, -11/9, -11/10, -10/3, -10/4, -10/5, -10/6, -10/7, -10/8, -10/11, -9/-1, -9/0, -9/2, -9/3, -9/4, -9/8, -9/11, -8/-1, -8/0, -8/4, -8/5, -8/6, -8/7, -8/9, -7/-1, -7/1, -7/2, -7/3, -7/5, -7/8, -7/9, -6/0, -6/1, -6/3, -6/5, -6/7, -6/9, -6/11, -5/-1, -5/1, -5/2, -5/4, -5/6, -5/7, -5/9, -5/10, -4/0, -4/2, -4/4, -4/6, -4/7, -4/8, -4/9, -4/11, -3/3, -3/4, -3/7, -3/8, -3/10, -2/-1, -2/0, -2/1, -2/2, -2/4, -2/7, -2/9, -2/10, -2/11, -1/-1, -1/0, -1/2, -1/6, -1/7, -1/10, 0/-1, 0/2, 0/3, 0/4, 0/7, 0/9, 0/10, 0/11}
	 {
			\begin{scope}[shift = {(\i*2+\j*5, \i*(-2))}]
				\draw[thin] (0, 0) -- (1, 1) -- (1, 2) -- (2, 2) -- (3, 3) -- (3, 2) -- (4, 2) -- (4, 1) -- (5, 1);
				\draw[thin] (1, 1) -- (2, 1) -- (2, 2);
				\draw[thin] (1, -1) -- (4, 2);
				\draw[very thick, dashed] (2, 1) -- (2, 0); 
				\draw[very thick, dashed] (3, 1) -- (3, 0); 
				\node (1) at (2.5, 0.5) {$1$};
				\draw[dashed] (1.center) circle (0.9cm);
			\end{scope}
	}
	\foreach \i/\j in {-12/-1, -12/2, -12/3, -12/4, -12/5, -12/7, -12/8, -12/11, -11/1, -11/3, -11/11, -10/-1, -10/0, -10/1, -10/2, -10/9, -10/10, -9/1, -9/5, -9/6, -9/7, -9/9, -9/10, -8/1, -8/2, -8/3, -8/8, -8/10, -8/11, -7/0, -7/4, -7/6, -7/7, -7/10, -7/11, -6/-1, -6/2, -6/4, -6/6, -6/8, -6/10, -5/0, -5/3, -5/5, -5/8, -5/11, -4/-1, -4/1, -4/3, -4/5, -4/10, -3/-1, -3/0, -3/1, -3/2, -3/5, -3/6, -3/9, -3/11, -2/3, -2/5, -2/6, -2/8, -1/1, -1/3, -1/4, -1/5, -1/8, -1/9, -1/11, 0/0, 0/1, 0/5, 0/6, 0/8}
	{
			\begin{scope}[shift = {(\i*2+\j*5, \i*(-2))}]
				\draw[thin] (0, 0) -- (1, 1) -- (1, 2) -- (2, 2) -- (3, 3) -- (3, 2) -- (4, 2) -- (4, 1) -- (5, 1);
				\draw[thin] (1, 1) -- (2, 1) -- (2, 2);
				\draw[thin] (1, -1) -- (4, 2);
				\draw[very thick, dashed] (2, 1) -- (3, 1); 
				\draw[very thick, dashed] (2, 0) -- (3, 0); 
				\node (1) at (2.5, 0.5) {$0$};
				\draw[dashed] (1.center) circle (0.9cm);
			\end{scope}
	}
	\foreach \x in {-1, ..., 20} {
		\foreach \y in {-1, ..., 20} {
			\fill (\x, \y) circle (3pt);
		}
	}
\end{scope}
\draw[very thick] (0, 1) rectangle (15, 16);
\end{tikzpicture}
\caption{Uncountably many locally optimal geometric graphs.}
\label{fig:uncountable}
\end{figure}
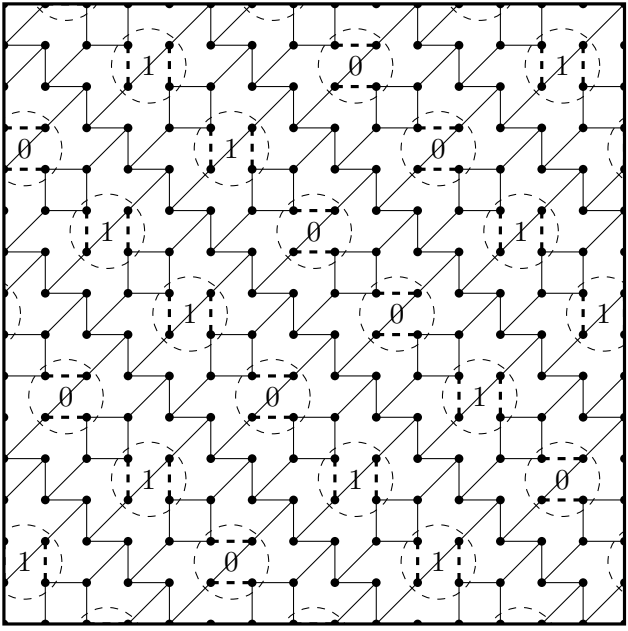

Assuming \cref{thm:main}, the geometric graphs we constructed in the previous proof have dilation $1+\sqrt{2}$.

We state the following key fact, for which we will give a computer-assisted proof in \cref{sec:CAP}.

\begin{lemma}[``Dilation boost'']\label{lem:knight}
Let $G \in \optiloc$. If $p, q\in \Z^2$ are such that $|pq| = \sqrt{5}$, then $d_G(p, q) \leq 3 + \sqrt{2}$.
\end{lemma}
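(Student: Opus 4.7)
The plan is to reduce the lemma to a finite case analysis which is then settled by computer; this is carried out in \cref{sec:CAP}.

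By the symmetries of $\Z^2$, I may assume $p = (0,0)$ and $q = (2,1)$. The first key observation is that every edge of a geometric graph on $\Z^2$ has Euclidean length at least $1$, so any path in $G$ of total length at most $3+\sqrt{2}$ uses at most four edges and therefore stays in a bounded region. I would then fix a finite ``window'' $R \subset \Z^2$ containing $p$ and $q$ and large enough that (i) every $p$--$q$ path in $G$ of length $\leq 3+\sqrt{2}$ lies entirely in $R$, and (ii) for every pair $x,y$ in some core $R_0 \subset R$ with $|xy| \leq \sqrt 5$, there is a shortest $x$--$y$ path in $G$ (of length $\leq (1+\sqrt 2)\,|xy|$) that also lies in $R$. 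Passing to the restriction $H := G[R]$, property (i) implies that proving $d_H(p, q) \leq 3 + \sqrt 2$ is enough, while property (ii) implies that $H$ inherits the local-dilation constraints $d_H(x, y) \leq (1+\sqrt 2)\,|xy|$ for every pair in $R_0$.

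The problem has now become finite: it suffices to enumerate, by backtracking, all plane geometric graphs $H$ on the vertex set $R$ of maximum degree $3$ which satisfy the inherited local-dilation constraints on $R_0$, and to check that $d_H(p, q) \leq 3+\sqrt 2$ for each of them. The branching is over the set of lattice segments with both endpoints in $R$, with aggressive pruning as soon as (a) planarity fails, (b) a vertex would exceed degree $3$, or (c) some local-dilation inequality is already violated by the current partial configuration. If any surviving configuration does not contain a $p$--$q$ path of length $\leq 3+\sqrt 2$, the lemma fails; otherwise it holds.

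The main obstacle is the dual constraint on the choice of $R$ (and of $R_0$): it must be large enough for properties (i) and (ii) to hold and for the inherited constraints to be strong enough to force $d_H(p,q) \leq 3+\sqrt 2$, yet small enough for the enumeration to terminate in reasonable time. A careful choice of $R$ together with the pruning above is what makes the computer check feasible; the implementation and the verification itself are deferred to \cref{sec:CAP}.
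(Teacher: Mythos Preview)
Your reduction to a finite window is sound in principle, and the paper's proof is likewise computer-assisted, but the two strategies differ substantially in how the finite search is organised.

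The paper argues by contradiction. Assuming $d_G(u,v)>3+\sqrt{2}$, it first shows by hand that every edge of a locally optimal graph has length $1$ or $\sqrt{2}$ (\cref{lem:short}), and then proves, via a first backtracking search (\cref{alg:version1}), that two small ``forbidden'' configurations $H_1,H_2$ never occur in any $G\in\optiloc$ (\cref{lem:forbidden}). Next it enumerates the finitely many candidate shortest $u$--$v$ paths whose length lies in $(3+\sqrt{2},(1+\sqrt{2})\sqrt{5}\,]$; up to symmetry there are only four (\cref{lem:paths}). For each such path $P_i$ a second backtracking search (\cref{alg:version2}) is seeded with $S_0=P_i$ and grows $S$ one close pair at a time by appending an $S$-admissible path, pruning whenever a copy of $H_1$ or $H_2$ appears, some close pair has no admissible path, or the extra constraint $d_G(u,v)\ge \mathrm{length}(P_i)$ is already violated. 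Termination in all four cases gives the contradiction.

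Your plan instead enumerates \emph{all} degree-$3$ plane graphs on a fixed window $R$ satisfying the inherited dilation inequalities on a core $R_0$. This is conceptually cleaner but computationally much heavier: a shortest path realising a close-pair constraint can have length up to $(1+\sqrt{2})\sqrt{5}\approx 5.4$, so the buffer $R\setminus R_0$ must already be several layers thick, and $R_0$ itself must be large enough that the constraints actually force $d_H(p,q)\le 3+\sqrt{2}$---a size you do not know in advance. Without the short-edge lemma, the forbidden-pattern pruning, the seeding by the four candidate paths, and the additional cut from the assumed lower bound on $d_G(u,v)$, the raw branching over all lattice segments in such a window is unlikely to terminate. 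Those structural reductions are exactly what the paper adds to make the verification feasible.
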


\begin{remark}
The definition of $\optiloc$ only gives, a priori, that $d_G(p, q) \leq (1+\sqrt{2})\sqrt{5} \approx 5.40$. \Cref{lem:knight} improves this to $d_G(p, q) \leq 3+\sqrt{2} \approx 4.41$. 
\end{remark}

\begin{lemma}\label{lem:knightDistance}
Define an undirected weighted (non-geometric) graph $H$ with vertex set $\Z^2$ as follows: for $a, b\in \Z^2$,
\begin{itemize}
	\item if $|ab| \in \{1, \sqrt{2}\}$, then there is an edge between $a$ and $b$ of weight $(1+\sqrt{2})|ab|$;
	\item if $|ab| = \sqrt{5}$, then there is an edge between $a$ and $b$ of weight $3+\sqrt{2}$;
	\item otherwise, there is no edge between $a$ and $b$.
\end{itemize}
Then, for all $p, q\in \Z^2$, we have
\begin{equation}
\label{eq:dist}
d_{H}(p, q) \leq (1+\sqrt{2})|pq|.
\end{equation}
\end{lemma}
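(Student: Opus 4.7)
\textbf{Proof plan for \cref{lem:knightDistance}.} The graph $H$ and its weights are invariant under the symmetries of $\Z^2$ (the reflections and the rotations by $\pi/2$), so I would begin by reducing to the case $p = (0,0)$ and $q = (a,b)$ with $0 \leq b \leq a$. The plan is then to build a single explicit path from $p$ to $q$ using only three types of steps: the unit step $(1,0)$ (weight $1+\sqrt{2}$), the diagonal step $(1,1)$ (weight $2+\sqrt{2}$), and the knight step $(2,1)$ (weight $3+\sqrt{2}$). If we use these step types with multiplicities $x$, $y$, $z$, then reaching $(a,b)$ amounts to $x + y + 2z = a$ and $y + z = b$.

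The key observation is a trade-off calculation: replacing one unit step and one diagonal step by a single knight step preserves the displacement and changes the total weight by $(3+\sqrt{2}) - (1+\sqrt{2}) - (2+\sqrt{2}) = -\sqrt{2}$. Thus knight steps strictly decrease the cost, while unit and diagonal steps have the ``neutral'' rate $1+\sqrt{2}$ per unit Euclidean length. Accordingly, I would set $z := \min(b,\,a-b)$ (making $z$ as large as possible), $y := b - z$ and $x := a - b - z$, which are all non-negative integers, and take the path consisting of $z$ knight steps, $y$ diagonals and $x$ unit steps in any order.

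It remains to verify $W \leq (1+\sqrt{2})\sqrt{a^2+b^2}$, where $W = (1+\sqrt{2})x + (2+\sqrt{2})y + (3+\sqrt{2})z$. Split into two cases according to which argument achieves the minimum in the definition of $z$. If $a \geq 2b$, then $z = b$, $y = 0$, $x = a-2b$, and a short computation gives $W = (1+\sqrt{2})a - (\sqrt{2}-1)b$; the desired inequality, after squaring (both sides being non-negative because $1+\sqrt{2} > 2(\sqrt{2}-1)$), reduces to $-2ab + (3-2\sqrt{2})b^2 \leq (3+2\sqrt{2})b^2$, i.e.\ $-2ab \leq 4\sqrt{2}\,b^2$, which is trivial. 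If instead $b \leq a < 2b$, then $z = a-b$, $y = 2b-a$, $x = 0$, and $W = a + (1+\sqrt{2})b$; squaring and expanding reduces the claim to $2(1+\sqrt{2})ab \leq 2(1+\sqrt{2})a^2$, i.e.\ $b \leq a$, which holds by assumption.

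The only substantive step is the choice of path; everything else is a routine calculation. I do not expect a real obstacle, beyond being careful that $x, y, z$ are non-negative integers (which they are by construction) and that equality is attained in interesting cases such as $(a,b) = (1,1)$ or $(a,b) = (1,0)$, consistent with the $1+\sqrt{2}$ rate being sharp on axis-aligned and diagonal neighbours.
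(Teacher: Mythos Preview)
Your argument is correct and, in fact, cleaner than the paper's own proof. The paper proceeds by a computer check of \cref{eq:dist} for all $q$ with $|q|<5\sqrt{2}$, followed by a minimal-counterexample induction: if $q=(x,y)$ with $5\le y\le x$ fails, take $r=(x-2,y-1)$ and bound $d_H(p,q)\le d_H(p,r)+(3+\sqrt{2})$, then verify $(1+\sqrt{2})\sqrt{(x-2)^2+(y-1)^2}+(3+\sqrt{2})\le (1+\sqrt{2})\sqrt{x^2+y^2}$ on that region. You instead construct a single explicit path from $(0,0)$ to $(a,b)$ (with $0\le b\le a$) by greedily maximising the number of knight steps, and then check the weight bound by a direct two-case computation. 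The upshot is that your proof is entirely computer-free and avoids both the finite base case and the calculus estimate; the paper's version has the minor advantage that its inductive step isolates exactly the ``knight-move gain'' mechanism, but at the cost of an external verification. Your two squared-inequality reductions are correct as written (the left-hand sides are non-negative in the relevant ranges, so squaring is legitimate), and the non-negativity of $x,y,z$ follows from $0\le b\le a$ as you note.
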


\begin{proof}
By translation invariance of $H$, we may assume that $p = (0, 0)$ in \cref{eq:dist}. The python file \texttt{lemma\_2\_4.py} checks that \cref{eq:dist} holds for $q\in \Z^2$ with $|pq| < 5\sqrt{2}$.

Suppose by contradiction that \cref{eq:dist} does not hold for some $q\in \Z^2$. Choose such a point $q$ with $|pq|$ minimal. By the symmetries of $H$, we may assume that $q$ lies in the first octant, so $q = (x, y)$ with $5\leq y \leq x$. 

We know that the point $r=(x-2, y-1)$ satisfies \cref{eq:dist}, since $|pr|< |pq|$. Therefore,
\begin{align*}
d_H(p, q) &\leq d_H(p, r) + d_H(r, q)  \\
&\leq (1+\sqrt{2})\sqrt{(x-2)^2+(y-1)^2} + (3+\sqrt{2})\\
&\leq (1+\sqrt{2})\sqrt{x^2+y^2},
\end{align*}
where the last inequality is true in the region $5\leq y \leq x$ by elementary calculus. This contradicts our assumption that $q$ does not satisfy \cref{eq:dist}.
\end{proof}

\main*

\begin{proof}[Proof of \cref{thm:main}, assuming \cref{lem:knight}]
Let $G$ be a locally optimal graph and let $H$ be the weighted graph defined in \cref{lem:knightDistance}. By definition of $\optiloc$ and \cref{lem:knight}, we see that ${d_G(p, q) \leq d_{H}(p, q)}$ for every $p, q\in \Z^2$. By \cref{lem:knightDistance}, we conclude that ${d_G(p, q)\leq (1+\sqrt{2})|pq|}$ for all $p, q\in \Z^2$, i.e.~$G\in \opti$.
\end{proof}

\section{Proof of the dilation boost}
\label{sec:CAP}

This section is devoted to the proof of \cref{lem:knight}. We start with an easy observation: there are only short edges in a locally optimal graph. 

\begin{lemma}\label{lem:short}
The edges of every $G \in \optiloc$ are of length $1$ or $\sqrt{2}$. 
\end{lemma}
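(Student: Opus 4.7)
My plan is a proof by contradiction: assume $G \in \optiloc$ contains an edge $pq$ with $|pq| \ge 2$, and derive an obstruction to the dilation bound. By translation and the $D_4$-symmetries of $\Z^2$, I take $p = (0, 0)$. The first key observation is that the long edge $pq$ is useless as the initial segment of a path from $p$ to any unit neighbor of $p$: for any $r \in \Z^2$ with $|pr| = 1$ (so $r \ne q$), the triangle inequality gives $|qr| \ge |pq| - 1 \ge 1$, hence any path starting $p \to q \to \cdots \to r$ has length at least $|pq| + |qr| \ge 3 > 1 + \sqrt{2}$. Since $\deg_G(p) \le 3$ and one edge is $pq$, only the at most two remaining neighbors $n_1, n_2$ of $p$ can initiate the short paths from $p$ to each of its four unit neighbors $r_1, r_2, r_3, r_4$ required by local optimality. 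In particular, for each $r_i$, some $j \in \{1, 2\}$ satisfies $|pn_j| + |n_j r_i| \le 1 + \sqrt{2}$.

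The next step is to enumerate the lattice points $n$ that can ``cover'' at least one $r_i$ in this sense. An elementary check shows that these are precisely the eight lattice points within distance $\sqrt{2}$ of $p$: the four unit neighbors (each covering the three unit neighbors of $p$ lying within distance $\sqrt{2}$) and the four diagonal neighbors (each covering the two adjacent unit neighbors). Pigeonhole forces some $n_j$ to cover at least two of $r_1, \ldots, r_4$, and because the covering inequality $|p n_j| + |n_j r_i| \le 1+\sqrt{2}$ is tight whenever $n_j \ne r_i$, the relevant length-$\sqrt{2}$ edges must actually be present in $G$. Running exactly the same analysis at $q$ yields analogous constraints on $q$'s two non-$p$ neighbors.

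The last step combines the local structures near $p$ and $q$ to exhibit a pair $(u, v)$ of lattice points with $|uv| \le \sqrt{5}$ and $d_G(u, v) > (1+\sqrt{2}) |uv|$, contradicting local optimality. This final combinatorial step is the main obstacle: it reduces to a case analysis over the position of $q$ (by symmetry, $q \in \{(2, 0), (2, 1), (1, 2), (2, 2), \ldots\}$, with $|pq| = 2$ being the most delicate case since the ``bypass'' includes the lattice point $(1,0)$ on the segment $pq$) and over the possible choices of $(n_1, n_2)$ at $p$ and of the analogous pair at $q$. In each case, the tight coverage at $p$ and $q$ consumes the degree budget of some nearby vertex—typically a unit neighbor of $p$ close to $q$—which then fails its own dilation constraints toward its own four unit neighbors, producing the required obstructing pair.
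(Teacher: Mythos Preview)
Your approach is quite different from the paper's, and the final step is a genuine gap rather than a routine case check.

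The paper's proof uses planarity in one stroke. With the long edge from $a=(0,0)$ to $b=(i,j)$ (normalised to $1\le i<j$), the two points $(0,1)$ and $(1,1)$ lie on opposite sides of the segment $ab$, so any path between them in a \emph{plane} graph must pass through $a$ or $b$. Passing through $b$ is too long, so the only path of length $\le 1+\sqrt{2}$ is $(0,1)$--$a$--$(1,1)$, forcing both of these to be edges. Now $a$ already has degree~$3$ (neighbours $b$, $(0,1)$, $(1,1)$) and cannot reach $(0,-1)$ within $1+\sqrt{2}$. That is the whole proof; no case split occurs.

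You never invoke planarity. Your first two paragraphs are correct: the long edge is useless for reaching unit neighbours of $p$, so the remaining (at most two) neighbours $n_1,n_2$ must cover all four, and when the covering inequality is tight the corresponding $\sqrt{2}$-edges are forced. But the third paragraph is not a proof. You announce a case analysis over the position of $q$ and over $(n_1,n_2)$ at both endpoints, yet the list $q\in\{(2,0),(2,1),(1,2),(2,2),\ldots\}$ is infinite, and when $|pq|$ is large the local structures at $p$ and $q$ do not interact at all---so your stated mechanism (``a unit neighbor of $p$ close to $q$'' whose degree budget is consumed by both) cannot apply. It is plausible that propagating the constraints from $p$ alone eventually yields a contradiction regardless of where $q$ sits, but that cascade branches in several places (the choice of $\{n_1,n_2\}$ among unit/diagonal neighbours, then the third neighbour of each forced vertex, etc.) and you have not carried it out. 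The planarity observation in the paper is exactly what collapses this entire analysis to two sentences.
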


\begin{proof}
Suppose that there is an edge in $G$ of length greater than $\sqrt{2}$. Without loss of generality, this edge may be assumed to have endpoints $a = (0, 0)$ and $b = (i, j)$ for some $1\leq i < j$. 

Assume for the moment that $j > 2$. Consider the points $p = (0, 1)$ and $q = (1, 1)$. Since $G \in \optiloc$, we need to have $d_G(p, q) \leq 1+\sqrt{2}$. As $G$ is plane, this forces the segments $pa$ and $aq$ to be edges of $G$. However, we also need to have $d_G(a, r) \leq 1+\sqrt{2}$, where $r = (0, -1)$. This is not possible since $a$ already has degree $3$. 

If $j=2$, i.e.~$b = (1, 2)$, the same reasoning applies, exchanging the roles of $a$ and $b$ if necessary.
\end{proof}

\begin{figure}[H]
\begin{tikzpicture}[scale = 0.7]
	\draw[thick] (0, 0) -- (2, 3);
	\draw[thick] (0, 1) -- (0, 0) -- (1, 1); 
	\foreach \x in {-1, ..., 3} {
		\foreach \y in {-2, -1, ..., 4} {
			\fill (\x, \y) circle (1.5pt);
		}
	}
	\node[below=3pt, right] (0) at (0, 0) {$a$};
	\node[right] (1) at (2, 3) {$b$};
	\node[left] (2) at (0, 1) {$p$};
	\node[right] (3) at (1, 1) {$q$};
	\node[right] (4) at (0, -1) {$r$};
	
	\draw[dashed, thick, <->] (0, -1) arc (-150: -210: 1cm);
	\node[left = 2pt] (-1) at (0, -0.5) {$?$};
\end{tikzpicture}
\caption{Proof of \cref{lem:short}.}
\label{fig:shortEdges}
\end{figure}

The previous lemma says that some ``edge patterns'', namely edges of length greater than $\sqrt{2}$, cannot appear in a locally optimal graph. In \cref{lem:forbidden}, we give two more such patterns. This time, the proof is computer-assisted.

\begin{lemma}\label{lem:forbidden}
Let $G \in \optiloc$ and let $H_1, H_2$ be the edge configurations in \cref{fig:forbidden}. Then, neither $H_1$ nor $H_2$ (nor any translation, rotation or reflection of one of these two configurations) is a subgraph of $G$.
\end{lemma}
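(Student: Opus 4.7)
The plan is to argue by contradiction in the same spirit as \cref{lem:short}, but pushed further because the patterns $H_1, H_2$ are larger than a single bad edge. Assuming a translated/rotated/reflected copy of $H_i$ sits inside a locally optimal $G$, we fix a canonical placement (this costs at most a bounded number of symmetry classes per pattern) and try to complete $H_i$ into a graph $G\in\optiloc$ on a suitable neighborhood $N\subset\Z^2$ of the support of $H_i$. The goal is to show that no such completion exists: at some pair $(p,q)$ of lattice points inside $N$ with $|pq|\leq\sqrt{5}$, the constraint $\dil_G(p,q)\leq 1+\sqrt{2}$ will conflict with either the degree-$3$ condition, planarity, or with the forced edges of $H_i$.

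The computer-assisted step would enumerate all edge subsets $E$ of $N$ with $H_i\subseteq E$ satisfying: (a) \cref{lem:short}, so every edge of $E$ has length $1$ or $\sqrt{2}$; (b) every vertex has degree at most $3$; (c) $E$ is plane (in particular, the two $\sqrt{2}$-edges across any unit square are mutually exclusive); and (d) for every $(p,q)$ with both endpoints in $N$ and $|pq|\leq\sqrt{5}$, the graph $(N,E)$ contains a $p\leadsto q$ path of length at most $(1+\sqrt{2})|pq|$. The lemma amounts to showing that the set of admissible $E$ is empty for $H=H_1$ and $H=H_2$. Choosing $N$ to be the lattice points within some fixed Chebyshev distance of $H_i$ is enough: since $(1+\sqrt{2})\sqrt{5}\approx 5.40$, any short path witnessing a local optimality constraint between vertices of $H_i$ is confined to a small disk around $H_i$, so shortest-path distances computed inside $N$ are valid upper bounds on $d_G$ for the constraints being tested. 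In practice I would expect a radius of about $3$ or $4$ around $H_i$ to suffice, and the supporting python script (analogous to \texttt{lemma\_2\_4.py}) performs the enumeration.

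The main obstacle I expect is computational tractability rather than any conceptual difficulty: naively branching on every candidate edge of $N$ yields an exponential blow-up. The natural remedy is to branch incrementally vertex by vertex, using constraint propagation --- as soon as a partial configuration violates a degree, planarity, or local dilation constraint for some pair $(p,q)$ with $|pq|\in\{1,\sqrt{2},\sqrt{5}\}$, the branch is pruned. A secondary subtlety is verifying that the chosen $N$ is indeed large enough that a shortest $p\leadsto q$ path realizing $d_G(p,q)$ cannot gain length by leaving $N$; this is handled by taking $N$ generous enough that any putative path exiting $N$ already exceeds $(1+\sqrt{2})|pq|$ in length. Once the enumeration terminates with no admissible extension for either $H_1$ or $H_2$, the lemma follows.
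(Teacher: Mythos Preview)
Your outline is in the same computer-assisted spirit as the paper, but the mechanics differ and your formulation has a soundness wrinkle that the paper's setup avoids.

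\textbf{The boundary issue.} Your condition (d) cannot be imposed on \emph{every} close pair with both endpoints in $N$. If $G\in\optiloc$ contains $H_i$ and $E=G|_N$, then pairs near $\partial N$ may well have their short $G$-paths leave $N$; such an $E$ would fail your (d) even though it is a genuine restriction of a locally optimal graph. So emptiness of your admissible set would not yet prove the lemma. You clearly sense this (``secondary subtlety''), but the stated fix---enlarge $N$---does not help: enlarging $N$ just moves the boundary. The correct fix is to impose (d) only on pairs in an inner region $N'\subset N$ with $N$ chosen so that any path of length $\leq(1+\sqrt{2})\sqrt{5}$ from a point of $N'$ stays inside $N$. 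You then need the contradiction to be derivable from constraints on $N'$-pairs alone, and you have no a~priori control on how large $N'$ (hence $N$) must be.

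\textbf{How the paper sidesteps this.} Rather than fixing a window and enumerating edge sets, the paper branches on \emph{paths}: starting from $S=H_i$, it repeatedly selects a close pair $(p,q)$, lists the finitely many $S$-admissible $p$--$q$ paths (short, degree-compliant, plane relative to $S$), and recurses on $S\cup\gamma$ for each such $\gamma$. This is sound with no boundary bookkeeping, because every branch records edges that \emph{must} lie in any $G\in\optiloc$ extending $S$; and it is efficient, because each branching step simultaneously adds several edges and discharges one dilation constraint. Termination is observed empirically rather than guaranteed in advance, which also removes the need to guess $N$. One further trick you omit: once $H_1$ is shown forbidden, the paper adds ``detect a copy of $H_1$'' as an extra pruning rule when handling $H_2$, which materially shortens that search.
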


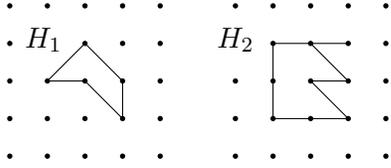
\begin{figure}[H]
\centering
\begin{tikzpicture}
\begin{scope}[scale = 0.5, yshift = 1cm] 
	\clip (-2.1, -2.1) rectangle (2.1, 2.1);
	\foreach \x in {-2, ..., 2} {
			\foreach \y in {-2, ..., 2} {
				\fill (\x, \y) circle (2pt);
			}
		}
	\draw (0, 0) -- (-1, 0) -- (0, 1) -- (1, 0) -- (1, -1) -- (0, 0);
	\fill[color = white] (-1.1, 1.1) circle (0.6cm);
	\node (0) at (-1.1, 1.1) {$H_1$};
\end{scope}
\begin{scope}[scale = 0.5, xshift = 6cm] 
	\foreach \x in {-2, ..., 2} {
			\foreach \y in {-1, ..., 3} {
				\fill (\x, \y) circle (2pt);
			}
		}
	\draw (1, 2) -- (-1, 2) -- (-1, 0) -- (1, 0) -- (0, 1) -- (1, 1) -- (0,2);
	\fill[color = white] (-2, 2.1) circle (0.6cm);
	\node (0) at (-2, 2.1) {$H_2$};
\end{scope}
\end{tikzpicture}
\caption{Two configurations that cannot appear in a locally optimal graph.}
\label{fig:forbidden}
\end{figure}

Suppose that we wish to prove that a given set $S$ of edges on $\Z^2$ is never contained in a locally optimal graph (for example, $S = H_1$ or $H_2$). We start with some definitions.

\begin{definition} 
A \emph{close pair} is a pair $(p, q)$ of points in~$\Z^2$ such that $|pq| \leq \sqrt{5}$.
\end{definition}

\begin{definition}
Let $(p, q)$ be a close pair, and let $\gamma = (v_1v_2, v_2v_3, \ldots, v_{n-1}v_n)$ be a path between $v_1 = p$ and $v_n = q$ (each intermediate vertex $v_i$ is in $\Z^2$, the edges $v_iv_{i+1}$ are not necessarily in $S$). We say that $\gamma$ is an \emph{$S$-admissible path between $p$ and $q$} if the following conditions are verified: 
\begin{itemize}
	\item each edge $v_iv_{i+1}$ has length $1$ or $\sqrt{2}$;
	\item $(\Z^2, S \cup \gamma)$ is a plane geometric graph of maximum degree at most $3$; 
	\item the length of $\gamma$ is at most $(1+\sqrt{2})|pq|$.
\end{itemize}
\end{definition}

For any close pair $(p, q)$, exactly one of the following cases must occur. 

\begin{enumerate}
	\item \textsc{Contradiction.} There is no $S$-admissible path between $p$ and $q$.	
	\item \textsc{Satisfaction.} There is at least one $S$-admissible path between $p$ and $q$ which is entirely contained in $S$.
	\item \textsc{Deduction.} There is exactly one $S$-admissible path between $p$ and $q$, and this path is not entirely contained in $S$.	
	\item \textsc{Exploration.} There are several $S$-admissible paths between $p$ and $q$, none of which is entirely contained in $S$.
\end{enumerate}

With this terminology, we can now present the backtracking algorithm that we will use. If $G$ is a locally optimal graph containing $S$, every close pair must have an $S$-admissible path $\gamma$ consisting of edges of $G$. Using this fact we can, starting from $S$, try all possible ways of reconstructing $G$ and hope to eventually find a contradiction in all cases. More precisely, consider \cref{alg:version1}.

\begin{algorithm}[H]
	\caption{Goal: prove that a set $S_0$ of edges cannot be contained in a locally optimal graph}
	\label{alg:version1}
	\begin{algorithmic}[1]
		\Input $S_0$, a finite set of edges 
		\vspace*{0.1cm}
		\Function{Expand}{$S$} 
			\If{we detect at least one close pair \newline \hspace*{2.3em} with \textsc{Contradiction}} 
				\State\Return
			\ElsIf{we detect at least one close pair \newline \hspace*{4.5em} with \textsc{Deduction}}
				\Let{$(p, q)$}{any close pair with \textsc{Deduction}}
				\Let{$\gamma$}{the $S$-admissible path between $p$ and $q$}
				\State{\textsc{Expand}$(S\cup \gamma)$} 
			\Else 
				\Let{$(p, q)$}{any close pair} 
				\Let{$L$}{$\{S$-admissible paths between $p$ and $q\}$}
				\For{$\gamma$ in $L$} 
					\State\textsc{Expand}$(S\cup \gamma)$
				\EndFor
			\EndIf
		\EndFunction
		\State
		\State \textsc{Expand($S_0$)} \Comment{Terminates $\Rightarrow$ valid proof}
	\end{algorithmic}
\end{algorithm}

\begin{remark}
On lines [2:]\ and [4:]\ of \cref{alg:version1}, by ``we detect'', it is meant that the algorithm spots a close pair with the desired property, but it might not find any even if one exists. 
\end{remark}

\begin{claim}\label{claim}
If \cref{alg:version1} terminates with input $S$, then $S$ cannot be contained in a locally optimal graph.
\end{claim}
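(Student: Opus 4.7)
The plan is to prove the contrapositive by induction on the recursion tree of \textsc{Expand}$(S)$, which, under the hypothesis that the call terminates, is finite. Specifically, I would assume both that $S \subseteq E(G)$ for some $G \in \optiloc$ and that \textsc{Expand}$(S)$ terminates, and derive a contradiction by induction on the size (or depth) of the recursion tree.

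The invariant to establish first---and the only substantive ingredient---is that for every close pair $(p, q)$, the shortest path $\pi_{p, q}$ from $p$ to $q$ in $G$ is an $S$-admissible path. Indeed, by \cref{lem:short} its edges have length $1$ or $\sqrt{2}$; since $\pi_{p,q}$ is a subgraph of $G$ and $S \subseteq E(G)$, the graph $(\Z^2, S \cup \pi_{p,q})$ is a subgraph of $G$ and therefore inherits planarity and the maximum-degree-$3$ property; and by local optimality the length of $\pi_{p,q}$ equals $d_G(p,q) \leq (1+\sqrt{2})|pq|$. In particular, $S \cup \pi_{p,q} \subseteq E(G)$.

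With this observation, I would treat the three branches of \textsc{Expand} separately. In the \textsc{Contradiction} branch, the algorithm detects a close pair admitting no $S$-admissible path, in direct contradiction to the existence of $\pi_{p,q}$. In the \textsc{Deduction} branch, the unique $S$-admissible path $\gamma$ between the chosen pair $(p, q)$ must coincide with $\pi_{p,q}$, so $S \cup \gamma \subseteq E(G)$; applying the inductive hypothesis to the terminating recursive call \textsc{Expand}$(S \cup \gamma)$ yields a contradiction. In the \textsc{Exploration} branch, $\pi_{p,q}$ is among the paths $\gamma \in L$, so one of the recursive calls---namely \textsc{Expand}$(S \cup \pi_{p,q})$---satisfies the hypotheses of the inductive statement and contradicts it.

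I do not anticipate a serious obstacle. The statement is essentially a soundness proof for a systematic backtracking search, and the nontrivial fact that $S$-admissibility is a genuine necessary condition for $S$ to embed in some $G \in \optiloc$ follows immediately from \cref{lem:short} and the definition of $\optiloc$. The only point requiring minor care is the well-foundedness of the induction, but this is automatic: by assumption \textsc{Expand}$(S)$ terminates, so every descendant in its recursion tree does as well and strictly smaller trees are available for the inductive hypothesis.
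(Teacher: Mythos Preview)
Your proposal is correct and follows essentially the same approach as the paper: both arguments hinge on the observation that whenever $S \subseteq E(G)$ for some $G \in \optiloc$, every close pair admits an $S$-admissible path consisting of edges of $G$ (the shortest path in $G$), so the algorithm never reaches \textsc{Contradiction} and always recurses on some $S' \subseteq E(G)$. The only cosmetic difference is that you phrase this as induction on the recursion tree, whereas the paper takes the equivalent shortcut of considering the \emph{last} call \textsc{Expand}$(S)$ with $S \subseteq E(G)$ and deriving an immediate contradiction from the existence of a further such call.
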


\begin{proof}[Proof of \cref{claim}]
Suppose by contradiction that \cref{alg:version1} terminates for a finite set of edges $S_0$ which is contained in a locally optimal graph $G$.

In the execution of \cref{alg:version1}, there is a finite number of calls to \textsc{Expand}. Consider \textsc{Expand}$(S)$, the last of these calls where the argument $S$ is a (finite) set of edges of $G$.

Since $G \in \optiloc$, any close pair $(p, q)$ has at least one $S$-admissible path $\gamma_{p, q}$ consisting of edges of $G$. In particular, there is no close pair with \textsc{Contradiction}, and either lines [5-7:]\ or [9-13:]\ will be executed. 

Let $(p, q)$ be the pair chosen on line [5:]\ or [9:]. By construction, there will be a call to \textsc{Expand}$(S\cup \gamma_{p, q})$ on line [7:]\ or [12:]. However, $S\cup \gamma_{p, q}$ is a finite set of edges of $G$, contradicting the assumption on $S$.
\end{proof}

\begin{remark} 
\Cref{claim} holds regardless of how the points $p$ and $q$ are chosen on lines [5:]\ and [9:]. In practice, on line [9:], it is important to choose the points $p$ and $q$ in a such a way that the algorithm terminates in a reasonable amount of time (or terminates at all\footnote{It may be the case that, given two different ways of choosing the pairs $(p, q)$ in \cref{alg:version1}, the algorithm terminates for one but not for the other (with the same input $S_0$ in both cases). However, if it terminates for some choices of pairs $(p, q)$, we are certain that $S_0$ cannot be contained in a locally optimal graph.}).
\end{remark}

\begin{proof}[Proof of \cref{lem:forbidden}]
We first apply (the implemented version of) \cref{alg:version1} with input $S_0 = H_1$ and see that the algorithm terminates. By \cref{claim}, \cref{lem:forbidden} is proved for $H_1$.

Having just proved that $H_1$ cannot be contained in a locally optimal graph, we may use a slightly modified method for $H_2$. We apply \cref{alg:version1} with input $S_0 = H_2$ after inserting the following test between lines [1:]\ and [2:]\ of \cref{alg:version1}.

\algrenewcommand\alglinenumber[1]{\footnotesize +#1:}
\vspace*{0.2cm}
\hspace*{0.65cm}
\begin{minipage}{0.4\textwidth}
        \begin{algorithmic}[1]
            	\If{we detect a copy of $H_1$ in $S$}
                  \State\Return
                \EndIf
        \end{algorithmic}
\end{minipage}
\vspace*{0.2cm}

Again, this modified version of \cref{alg:version1} terminates. Thus \cref{lem:forbidden} is proved for $H_2$.  
\end{proof}

A more general version of \cref{alg:version1} is implemented in the Python file \texttt{proof.py}. Further explanations will be given after \cref{alg:version2}.

\medbreak 

For the remainder of this section, fix, by contradiction, a locally optimal graph $G$ that violates the conclusion of \cref{lem:knight}. There exist $u, v\in \Z^2$ with $|uv| = \sqrt{5}$ that satisfy $d_G(u, v) > 3+\sqrt{2}$. Without loss of generality, we may assume that $u = (0, 0)$ and $v = (1, 2)$. 

\begin{lemma}\label{lem:paths}
Any shortest path in $G$ between $u$ and $v$ must be, up to symmetry, one of the four possibilities represented in \cref{fig:paths}.
\end{lemma}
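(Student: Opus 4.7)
The plan is to narrow down the combinatorial type of a shortest path $\gamma$ from $u$ to $v$ via length and parity arguments, then to enumerate the surviving configurations and prune them using \cref{lem:forbidden}.

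Local optimality applied to the close pair $(u,v)$ gives $d_G(u,v)\le (1+\sqrt{2})\sqrt{5}\approx 5.40$, and together with the hypothesis $d_G(u,v)>3+\sqrt{2}\approx 4.41$ this confines the length $L(\gamma)$ to the interval $(3+\sqrt{2},(1+\sqrt{2})\sqrt{5}]$. By \cref{lem:short}, every edge of $\gamma$ has length $1$ or $\sqrt{2}$, so $L(\gamma)=a+b\sqrt{2}$ for nonnegative integers $a,b$; the only such pairs in this window are $(5,0)$, $(2,2)$ and $(1,3)$. I then eliminate the two mixed cases by a parity argument on the total displacement $v-u=(1,2)$, which has parity (odd, even): each axis-parallel unit edge contributes parity (odd, even) or (even, odd), while each diagonal edge contributes (odd, odd). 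Summing, $(2,2)$ produces only (even, even) or (odd, odd) totals and is impossible, while $(1,3)$ forces the axis edge to be vertical and the three diagonals to sum to $(1,1)$ or $(1,3)$. A short case analysis on the small number of such $4$-edge candidates shows that each either reaches $v$ before its final step (contradicting minimality of $\gamma$) or, after invoking the local optimality constraint on a close pair along the path, forces a subgraph containing a copy of $H_1$ or $H_2$, contradicting \cref{lem:forbidden}.

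Thus $\gamma$ consists of exactly five axis-parallel unit edges. Writing $(a_R,a_L,a_U,a_D)$ for the step counts, the conditions $a_R-a_L=1$, $a_U-a_D=2$, $a_R+a_L+a_U+a_D=5$ have only the solutions $(1,0,3,1)$ and $(2,1,2,0)$. I enumerate all self-avoiding orderings of these multisets from $u$ to $v$ that avoid $v$ before the final step, and then mod out by the $180^{\circ}$ rotation about the midpoint $(1/2,1)$ of $uv$ --- a lattice symmetry swapping $u$ and $v$. For each remaining orbit I determine the edges forced near $\gamma$ by local optimality at nearby close pairs, and discard the orbit if the enlarged subgraph contains a copy (or symmetric image) of $H_1$ or $H_2$. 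The four surviving orbits should then coincide with the paths depicted in \cref{fig:paths}.

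The main obstacle is the final pruning step: each candidate path must be extended, via local dilation constraints at close pairs along it, into a partial subgraph of $G$, which is then inspected for forbidden patterns; the analogous elimination of the $(1,3)$ case in step~2 is similarly delicate. The enumerations are finite but require careful bookkeeping of the symmetry involution to ensure that no configurations are missed and none are double counted.
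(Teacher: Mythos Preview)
Your plan has a decisive error in the second paragraph. You claim to eliminate the $(a,b)=(1,3)$ case entirely, but in fact three of the four target paths $P_1,P_2,P_3$ are of exactly this type: each has one vertical unit step and three diagonal steps, giving length $1+3\sqrt{2}\approx 5.24$. Only $P_4$ is a five-unit-step path. So the assertion that ``a short case analysis'' on the $(1,3)$ candidates always produces a premature arrival at $v$ or a forbidden $H_1/H_2$ pattern is false; the bulk of the enumeration you need to carry out lives precisely in the $(1,3)$ branch you discarded. The subsequent analysis of the $(5,0)$ case, while correctly set up, therefore covers only $P_4$ and cannot by itself yield the lemma.

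A second issue is the pruning tool. The paper does not invoke \cref{lem:forbidden} here at all. Instead it uses the fact that on a \emph{shortest} path $P$ every subpath between intermediate vertices $x,y$ is itself a shortest $G$-path, hence has length $d_G(x,y)\le(1+\sqrt{2})|xy|$ whenever $|xy|\le\sqrt{5}$; this internal sub-path constraint is what eliminates the spurious candidates (and the paper delegates the finite enumeration and this pruning to a Python script). Your plan to force extra edges near $\gamma$ via local optimality and then search for $H_1,H_2$ is a different and much heavier mechanism, and you give no argument that it actually succeeds even on the $(5,0)$ branch. If you want a by-hand proof, the sub-path dilation constraint is the right lever; for example, it immediately rules out any $(1,3)$ path whose three diagonals are consecutive and collinear (subpath of length $3\sqrt{2}$ between points at Euclidean distance $3\sqrt{2}/(1+\sqrt{2})$-violating range), and similar observations quickly cut the list down to $P_1,\dots,P_4$.
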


\begin{proof}
Let $P$ be a shortest path between $u$ and $v$. By assumption, $3+\sqrt{2} < \mathrm{length}(P) \leq \sqrt{5}(1+\sqrt{2})$. This leaves only a small number of possibilities for $P$, which are enumerated by the Python program \texttt{lemma\_3\_8.py}. 

Not all paths whose lengths are in this range can actually be \emph{shortest} paths between $u$ and $v$. Some paths can be discarded using the constraint $\dil_G(x, y) \leq 1+\sqrt{2}$ not only for $u$ and $v$ but also for intermediate vertices in the path $P$ (execute \texttt{lemma\_3\_8.py} for the details).
\end{proof}

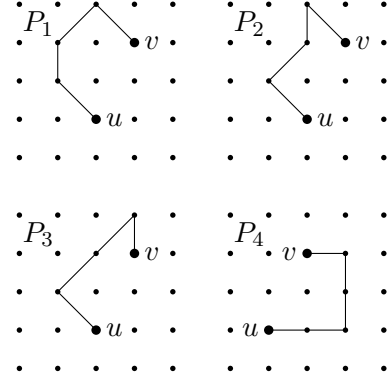
\begin{figure}[H]
\centering
\vspace*{0.1cm}
\begin{tikzpicture}[scale = 1.02]
\begin{scope}[scale = 0.5, xshift = 0cm] 
	\foreach \x in {-2, ..., 2} {
			\foreach \y in {-1, ..., 3} {
				\fill (\x, \y) circle (2pt);
			}
		}
	\draw (0, 0) -- (-1, 1) -- (-1, 2) -- (0, 3) -- (1, 2);
	\fill (0, 0) circle (3.5pt);
	\node[right] (0) at (0, 0) {$u$};
	\fill (1, 2) circle (3.5pt);
	\node[right] (0) at (1, 2) {$v$};
	\node (0) at (-1.5, 2.5) {$P_1$};
\end{scope}
\begin{scope}[scale = 0.5, xshift = 5.5cm] 
	\foreach \x in {-2, ..., 2} {
			\foreach \y in {-1, ..., 3} {
				\fill (\x, \y) circle (2pt);
			}
		}
	\draw (0, 0) -- (-1, 1) -- (0, 2) -- (0, 3) -- (1, 2);
	\fill (0, 0) circle (3.5pt);
	\node[right] (0) at (0, 0) {$u$};
	\fill (1, 2) circle (3.5pt);
	\node[right] (0) at (1, 2) {$v$};	
	\node (0) at (-1.5, 2.5) {$P_2$};
\end{scope}
\begin{scope}[scale = 0.5, yshift = -5.5cm] 
	\foreach \x in {-2, ..., 2} {
			\foreach \y in {-1, ..., 3} {
				\fill (\x, \y) circle (2pt);
			}
		}
	\draw (0, 0) -- (-1, 1) -- (1, 3) -- (1, 2);
	\fill (0, 0) circle (3.5pt);
	\node[right] (0) at (0, 0) {$u$};
	\fill (1, 2) circle (3.5pt);
	\node[right] (0) at (1, 2) {$v$};
	\node (0) at (-1.5, 2.5) {$P_3$};
\end{scope}
\begin{scope}[scale = 0.5, xshift = 5.5cm, yshift = -5.5cm] 
	\foreach \x in {-2, ..., 2} {
			\foreach \y in {-1, ..., 3} {
				\fill (\x, \y) circle (2pt);
			}
		}
	\draw (-1, 0) -- (1, 0) -- (1, 2) -- (0, 2);
	\fill (-1, 0) circle (3.5pt);
	\node[left] (0) at (-1, 0) {$u$};
	\fill (0, 2) circle (3.5pt);
	\node[left] (0) at (0, 2) {$v$};
	\node (0) at (-1.5, 2.5) {$P_4$};
\end{scope}
\end{tikzpicture}
\caption{List of the possible shortest paths between $u$ and $v$, up to symmetry.}
\label{fig:paths}
\end{figure}

We can now adapt \cref{alg:version1} to give a computer-assisted proof of the dilation boost.

\begin{algorithm}[H]
	\caption{Goal: prove that there is no ${G\in \optiloc}$ that contains a set $S_0$ of edges and such that a certain constraint of the form $d_G(u, v) \geq c$ is satisfied}
	\label{alg:version2}
	\begin{algorithmic}[1]
		\Input {$S_0$, a finite set of edges}
		\PhantomInput{two points $u, v\in \Z^2$ and a constant $c>0$}
		\vspace*{0.1cm}
		\Function{Expand}{$S$} 
			\If{we detect that $d_G(u, v) < c$ \newline \hspace*{2.3em} for all $G\in \optiloc$ containing $S$}
				\State\Return
			\ElsIf{we detect a copy of $H_1$ or $H_2$ in $S$}
				\State\Return
			\ElsIf{we detect at least one close pair \newline \hspace*{4.5em} with \textsc{Contradiction}}         	  
				\State\Return
			\ElsIf{we detect at least one close pair \newline \hspace*{4.5em} with \textsc{Deduction}} 
				\Let{$(p, q)$}{any close pair with \textsc{Deduction}}
				\Let{$\gamma$}{the $S$-admissible path between $p$ and $q$}
				\State{\textsc{Expand}$(S\cup \gamma)$} 
			\Else        	 
				\Let{$(p, q)$}{any close pair}
				\Let{$L$}{$\{S$-admissible paths between $p$ and $q\}$}
				\For{$\gamma$ in $L$} 
					\State\textsc{Expand}$(S\cup \gamma)$
				\EndFor
			\EndIf
		\EndFunction
		\State
		\State \textsc{Expand($S_0$)} \Comment{Terminates $\Rightarrow$ valid proof}
	\end{algorithmic}
\end{algorithm}

\begin{proof}[Proof of \cref{lem:knight}]
Let $P$ be a shortest path between $u$ and $v$. By \cref{lem:paths}, we may suppose that $P$ is one of $P_1, \ldots, P_4$. Suppose $P = P_i$ for some $1\leq i\leq 4$. Let $c_i = \mathrm{length}(P_i)$.

We want to prove that there is no locally optimal graph $G$ containing $P_i$ such that $d_G(u, v) \geq c_i$. We can use the same method as in the proof of \cref{lem:forbidden}, adding the constraint $d_G(u, v) \geq c_i$ to the algorithm. 

Concretely, we execute \cref{alg:version2} with $S_0 = P_i$, $u = (0, 0)$, $v = (1, 2)$ and $c = c_i$, and we do so for $1\leq i\leq 4$. In each case, we observe that the algorithm terminates.

\Cref{alg:version1,alg:version2} have a common implementation in the Python file \texttt{proof.py}. The file \texttt{interface.py} allows the reader to visualize the proof in real time, while \texttt{launch.py} contains the input data. 

Please check \texttt{README.md} to see how to execute the different parts of the proof with customized visualization options. Implementation details and configuration instructions may also be found in the \texttt{README.md} file. 

All files can be found on GitHub at \url{https://git.io/JTiCD} or on arXiv at \url{https://www.arxiv.org/src/2010.13473/anc}.
\end{proof}

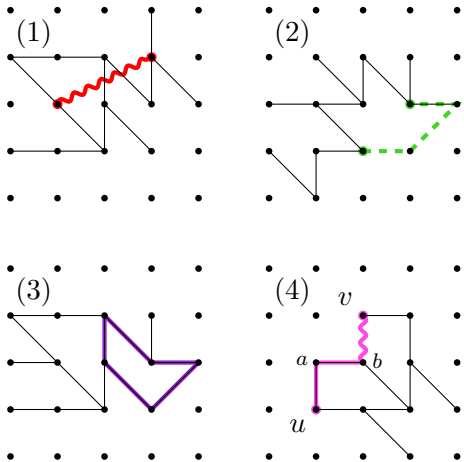
\begin{figure}[H]
\centering
\begin{tikzpicture}[scale=1.25]
\begin{scope}[scale = 0.5, xshift = 0cm] 
	\tikzset{decoration={snake,amplitude=.4mm,segment length=2mm,
                       post length=0mm,pre length=0mm}}
         \draw[decorate, color = red, ultra thick] (1, 1) -- (3, 2);
	\fill[color = red] (3, 2) circle (3pt);
	\fill[color = red] (1, 1) circle (3pt);
	\foreach \x in {0, ..., 4} {
			\foreach \y in {-1, ..., 3} {
				\fill (\x, \y) circle (2pt);
			}
		}
	\draw (0, 0) -- (1, 0) -- (2, 0) -- (1, 1) -- (0, 2) -- (1, 2) -- (2, 2) -- (2, 1) -- (2, 0);
	\draw (2, 1) -- (3, 0);
	\draw (2, 2) -- (3, 1) -- (3, 2) -- (3, 3);
	\draw (3, 2) -- (4, 1);	
	\node (0) at (0.5, 2.5) {$(1)$};
\end{scope}
\begin{scope}[scale = 0.5, xshift = 6.5cm] 
	\draw[color = {rgb,255:red,60; green,216; blue,29}, ultra thick, dashed] (1, 0) -- (2, 0) -- (3, 1) -- (2, 1); 
	\fill[color = {rgb,255:red,60; green,216; blue,29}] (1, 0) circle (3pt);
	\fill[color = {rgb,255:red,60; green,216; blue,29}] (2, 1) circle (3pt);	
	\foreach \x in {-1, ..., 3} {
			\foreach \y in {-1, ..., 3} {
				\fill (\x, \y) circle (2pt);
			}
		}
	\draw (0, 0) -- (1, 0) -- (0, 1) -- (1, 1) -- (0, 2);
	\draw (1, 1) -- (1, 2) -- (2, 1) -- (2, 2);
	\draw (2, 1) -- (3, 1);
	\draw (-1, 1) -- (0, 1);
	\draw (0, 0) -- (0, -1) -- (-1, 0);
	\node (0) at (-0.5, 2.5) {$(2)$};
\end{scope}
\begin{scope}[scale = 0.5, xshift = 0cm, yshift = -5.5cm] 
	\draw[color = {rgb,255:red,150; green,35; blue,200}, ultra thick] (2, 2) -- (2, 1) -- (3, 0) -- (4, 1); 	
	\draw[color = {rgb,255:red,150; green,35; blue,200}, ultra thick] (4, 1) -- (3, 1) -- (2, 2);
	\draw (0, 0) -- (1, 0) -- (2, 0) -- (2, 1) -- (2, 2) -- (1, 2) -- (0, 2) -- (1, 1) -- (2, 0);
	\draw (0, 1) -- (1, 1);
	\draw (2, 1) -- (3, 0) -- (4, 1) -- (3, 1) -- (3, 2);
	\draw (3, 1) -- (2, 2);
	\foreach \x in {0, ..., 4} {
			\foreach \y in {-1, ..., 3} {
				\fill (\x, \y) circle (2pt);
			}
		}
	\node (0) at (0.5, 2.5) {$(3)$};
\end{scope}
\begin{scope}[scale = 0.5, xshift = 6.5cm, yshift = -5.5cm] 
	\draw[color = {rgb,255:red,255; green,75; blue,230}, ultra thick] (0, 0) -- (0, 1) -- (1, 1);
	\tikzset{decoration={snake,amplitude=.4mm,segment length=2mm,
                       post length=0mm,pre length=0mm}}
         \draw[decorate, color = {rgb,255:red,255; green,75; blue,230}, ultra thick] (1, 1) -- (1, 2);
	\fill[color = {rgb,255:red,255; green,75; blue,230}] (0, 0) circle (3pt);
	\node[below left] (0) at (0, 0) {$u$};
	\fill[color = {rgb,255:red,255; green,75; blue,230}] (1, 2) circle (3pt);
	\node[above left] (1) at (1, 2) {$v$};
	\node (2) at (-0.3, 1.05) {\scriptsize{$a$}};
	\node (3) at (1.3, 1.05) {\scriptsize{$b$}};
	\foreach \x in {-1, ..., 3} {
			\foreach \y in {-1, ..., 3} {
				\fill (\x, \y) circle (2pt);
			}
		}
	\draw (0, 0) -- (1, 0) -- (2, 0) -- (2, 1) -- (2, 2) -- (1, 2);
	\draw (0, 0) -- (0, 1) -- (1, 1) -- (2, 0);
	\draw (1, 0) -- (2, -1);
	\draw (2, 1) -- (3, 0);
	\node (0) at (-0.5, 2.5) {$(4)$};
\end{scope}
\end{tikzpicture}
\caption{Graphical Interface examples.}
\label{fig:gui}
\end{figure}

We end this section by explaining the visualization produced by the \texttt{interface.py} file. \Cref{fig:gui} shows an example for each type of behavior. The segments in black form the current edge set $S$.

\begin{enumerate}
	\item \textsc{Contradiction}: there is no $S$-admissible path between the two endpoints of the red string. This corresponds to line~[6:]\ of \cref{alg:version2}.
	\item \textsc{Deduction}: the path represented in green is the only $S$-admissible path between the two endpoints (line~[8:]\ of \cref{alg:version2}).
	\item We detect a (rotated) copy of $H_1$ in $S$ (line [4:]).
	\item The condition on line [2:]\ of \cref{alg:version2} is verified with $c = 5$. Indeed, let $G$ be a locally optimal graph containing $S$. There is already a path (through $a$) between $u$ and $b$ in $S$ of length $2$, and $d_G(b, v) \leq 1+\sqrt{2}$ as $G\in \optiloc$. Thus, we must have $d_G(u, v)\leq 3+\sqrt{2} < 5$.
\end{enumerate}

\bibliography{article}
\bibliographystyle{amsplain}

\end{multicols}

\end{document}